\newcommand{\R}{\mathbb{R}}
\renewcommand{\phi}{\varphi}
\newcommand{\calF}{\mathcal{F}}
\newcommand{\calR}{\mathcal{R}}
\newcommand{\calS}{\mathcal{S}}
\newcommand{\calV}{\mathcal{V}}
\newcommand{\abs}[1]{\left|#1\right|}
\DeclareMathOperator{\pder}{\partial}
\DeclareMathOperator{\tr}{tr}
\newcommand{\lb}{\left(}
\newcommand{\lsb}{\left[}
\newcommand{\lcb}{\left\{}
\newcommand{\rb}{\right)}
\newcommand{\rsb}{\right]}
\newcommand{\rcb}{\right\}}
\newcommand{\lpm}{\begin{pmatrix}}
\newcommand{\rpm}{\end{pmatrix}}
\title{SIR--Model for Households}
\author{Philipp Dönges\thanks{Max Planck Institute for dynamics and self--organization, Göttingen, Germany	(\email{philipp.doenges@ds.mpg.de}; \email{viola.priesemann@ds.mpg.de}).}
\and Thomas Götz\thanks{Mathematical Institute, University Koblenz, Germany 
	(\email{goetz@uni-koblenz.de}; \email{nkruchinina@uni-koblenz.de}; \email{moritzschaefer@uni-koblenz.de}).}
\and Nataliia Kruchinina\footnotemark[2]
\and Tyll Krüger\thanks{Polytechnic University Wroclaw, Poland
	(\email{tyll.krueger@pwr.edu.pl}).}
\and Karol Niedzielewski\thanks{Interdisciplinary Centre for Mathematical and Computational Modelling (ICM), University of Warsaw, Poland
		(\email{k.niedzielewski@icm.edu.pl}).}
\and Viola Priesemann\footnotemark[1]
\and Moritz Schäfer\footnotemark[2]}
\begin{document}

\maketitle

\begin{abstract}
Households play an important role in disease dynamics. Many infections happening there due to the close contact, while mitigation measures mainly target the transmission between households. Therefore, one can see households as boosting the transmission depending on household size. To study the effect of household size and size distribution, we differentiated the within and between household reproduction rate. There are basically no preventive measures, and thus the close contacts can boost the spread.
We explicitly incorporated that typically only a fraction of all household members are infected. 
Thus, viewing the infection of a household of a given size as a splitting process generating a new, small fully infected sub--household and a remaining still susceptible sub--household we derive a compartmental ODE--model for the dynamics of the sub--households. In this setting, the basic reproduction number as well as prevalence and the peak of an infection wave in a population with given households size distribution can be computed analytically. We compare numerical simulation results of this novel household--ODE model with results from an agent--based model using data for realistic household size distributions of different countries. We find good agreement of both models showing the catalytic effect of large households on the overall disease dynamics.
\end{abstract}

\begin{keywords}
COVID--19, Epidemiology, Disease dynamics, SIR--model, Social Structure
\end{keywords}

\begin{MSCcodes}
92D30, 93-10
\end{MSCcodes}

\section{Introduction}

The spread of an infectious diseases strongly depends on the interaction of the considered individuals. Traditional SIR--type models assume a homogeneous mixing of the population and typically neglect the increased transmission within closed subcommunities like households or school--classes. However, literature indicates, that household transmission plays an important role~\cite{fraser2011influenza, del2022secondary, jorgensen2022secondary}.

In case of the COVID--pandemic, several studies have quantified the secondary attack rate within households, i.e. the probability household-members get infected, given that one household member is infected \cite{li2020characteristics,madewell2020household,house2022inferring,del2022secondary,jorgensen2022secondary,lyngse2022household}. The secondary attack rate depends on the virus variant, immunity and vaccination status, cultural differences and mitigation measures within a household (like e.g. early quarantine). In December 2021, when both the Delta and Omicron variants were spreading, it ranged between about 19 \% (Delta variant in Norway) \cite{jorgensen2022secondary} to 39 \% (Omicron in Spain) \cite{del2022secondary}. Hence, interestingly, the secondary attack rate within a household is far from 100 \% despite the close contacts. 

There are several models reported that try to include the contribution of in--household transmission to the overall disease dynamics. The Reed--Frost model describing in--household infections as a Bernoulli--process has be used by~\cite{glass2011incorporating, fraser2011influenza}. An average model assuming households always get completely infected and ignoring the temporal dynamics has been proposed in~\cite{becker1995effect}. Their findings for the effective reproduction number agree with our results, see~\eqref{E:tildeR}. Extensions of differential equation based SIR--models in case of a uniform household distribution have be proposed by~\cite{house2008deterministic,huber2020minimal}. However, in real populations, the household distribution is far from uniform, see~\cite{unstats2015demographic}. Ball and co--authors provided in ~\cite{ball2015seven} an overview of challenges posed by integrating household effects into epidemiological models, in particular in the context of compartmental differential equations models. 

In this paper we develop an extended ODE SIR--model for disease transmission within and between households of different sizes. We will treat the infection of a given household as a splitting process generating two new sub--households of smaller size - representing the susceptible ($S$) or fully infected ($I$) members. Each of the susceptible sub--households can get infected and split later in time, while infected sub--households recover with a certain rate and are then immune (recovered or removed, $R$). The dynamics of the susceptible, infected and recovered sub--households is modeled in Section~\ref{S:Model}. In Section~\ref{S:R0}, we compute the basic reproduction number for our household model combining the attack rate \textit{inside} a single households and the transmission rate \textit{between} individual households. 
The prevalence, as the limit of the recovered part of the population can be computed analytically --- at least in case of small maximal household size, see Section~\ref{S:Prevalence}. The peak of a single epidemic wave in a population with given household distribution is considered in Section~\ref{S:Peak}. Again, for small maximal household size, we will be able to compute analytically the maximal number of infected. Numerical simulations based on realistic household size distributions for different countries and a comparison with an agent--based model demonstrate the applicability of the presented model. As outlook with focus on non--pharmaceutical interventions we consider an extension of our model quarantining infected sub--households based on a certain detection rate for a single infected. In an appendix, we present an alternative derivation of analytical results for the basic reproduction number and the prevalence at the end of the epidemic based on random graph theory. These results confirm the findings based on the presented compartmental ODE model.

\section{Household Model}
\label{S:Model}

To model the infection process inside household, we view an infection event in a fully susceptible household of size $k$ as a splitting process generating a infected sub--household of size $j$, where $1\le j\le k$ and a remaining susceptible sub--household of size $k-j$. Let $S_j, I_j$ and $R_j$ denote the number of sub--households of size $j$, where $1\le j \le K$ and $K$ denotes the maximal household size. Then $H_j=S_j+I_j+R_j$ equals to the total number of current sub--households of size $j$. Furthermore, we introduce $H=\sum_{j=1}^K H_j$ as the total number of households and $h_j=H_j/H$. The total population is given by $N=\sum_{j=1}^K j H_j$. For further reference we also introduce the first two moments of the household size distribution
\begin{align*}
	\mu_1 := \sum_{j=1}^K j h_j \quad \text{and} \quad
	\mu_2 := \sum_{j=1}^K j^2 h_j\;.
\end{align*}
If an initial infection is brought into a susceptible sub--household of size $j$, secondary infections will occur inside the household. We assume that each of the remaining $j-1$ household members can get infected with equal probability $a$, called the \emph{in--household attack rate}. Hence we expect in total
\begin{equation*}
	E_j = a(j-1)+1
\end{equation*}
infections (including the primary one) inside a household of size $j$. Existing field studies, see~\cite{madewell2020household, house2022inferring} indicate an in--household attack rate in the range of $16\%$--$30\%$ depending on the overall epidemiological situation, household size and vaccinations. For the sake of tractability and simplicity, our model assumes a constant attack rate $a$ independent of the household size.

Let $b_{j,k}$ denote the probability, that a primary infection in a household of size $j$ generates in total $k$ infections inside this household, where $1\le k\le j$. The secondary infections give rise to a splitting of the initial household of size $j$ into a new, fully infected sub--household of size $k$ and another still susceptible sub--household of size $j-k$.

An infected household of size $k$ recovers with a rate $\gamma_k$ and contributes to the overall force of infection between different households by a so--called out--household infection rate $\beta_k$. The term ''out--household'' refers to infection events occurring between different households and hence \emph{outside} a given single household. We assume that the out--household reproduction number is independent of the household size, i.e. 
\begin{equation}
\label{E:Rstar_const}
	\calR^\ast = \frac{\beta_k}{\gamma_k} = \text{constant independent of $k$}\;.
\end{equation}

Now, the dynamical system governing the dynamics of the susceptible, infected and recovered households of size $k$ reads as
\begin{subequations}
\label{E:SIR-HH}
\begin{align}
	S_k' &= Y \lsb - k S_k + \sum_{j=k+1}^K\!\! j S_j \cdot b_{j,j-k} \rsb\;,  \label{E:SIR-HH-S}\\
	I_k' &= - \gamma_k I_k + Y \sum_{j=k}^K j S_j \cdot  b_{j,k}\;, \\
	R_k' &= \gamma_k I_k\;,
	\intertext{where}
	Y &:= \frac{1}{N} \sum_{k=1}^K \beta_k \cdot k I_k \notag
\end{align}
\end{subequations}
denotes the total force of infection.

For the recovery rate of an infected household of size $k$ we can consider the two extremal cases and an intermediate case.
\begin{enumerate}
\item \emph{Simultaneous} infections: All members of the infected household get infected at the same time and recover at the same time, hence the recovery rate $\gamma_k=\gamma_1$ is independent of the household size. Assumption~\eqref{E:Rstar_const} leads to a constant out--household infection rate $\beta_k=\beta$.
\item \emph{Sequential} infections: All members of the household get infected one after another and the total recovery time for the entire household equals to $k$ times the individual recovery time. Hence the recovery rate $\gamma_k=\gamma_1/k$ and by~\eqref{E:Rstar_const} we get $\beta_k=\beta_1/k$.
\item \emph{Parallel} infections: The recovery times $T_i$, $i=1,\dots, k$ for each of the $k$ infected individuals are modeled as independent exponentially distributed random variables. Hence the entire household is fully recovered at time $\max(T_1,\dots, T_k)$. The recovery rate $\gamma_k$ equals to the inverse of the expected recovery time, i.e.~
\begin{equation*}
	\gamma_k = \frac{1}{E[\max(T_1,\dots, T_k)]} = \frac{1}{E[T_1]\cdot \theta_k}
		= \frac{\gamma_1}{\theta_k}\;,
\end{equation*}
where $\theta_k=1+\tfrac12+\dots + \tfrac1k \sim \log k + g$ denotes the $k$--th harmonic number and $g\approx 0.5772\dots$ denotes the \emph{Euler--Mascheroni} constant.
\end{enumerate}
All these cases can be subsumed to
\begin{equation*}
	\gamma_k = \frac{\gamma_1}{\eta_k}, \quad \beta_k = \frac{\beta_1}{\eta_k}\;,
\end{equation*}
where $\eta_k$ models the details of the temporal dynamics inside an infected household.

In Figure~\ref{F:I_Gamma} we compare these three cases in the scenario of a population with maximal household size $K=6$. Each household size represents $1/6$ of the entire population. Initially, $1\permil$ of the population is infected. The out--household reproduction number is assumed to be $\calR^\ast=1.33$, in--household attack rate equals $a=0.2$ and the recovery rate $\gamma_1$ for an infected individual equals $0.1$. Shown are the incidences, i.e.~the daily new infections over time. The three cases differ in the timing and the height of the peak of the infection; for the simultaneous infections ($\gamma_k$ constant) the disease spreads fastest and for the sequential infections ($\gamma_k=\gamma_1/k$) the spread is significantly delayed. The cases of parallel infections, i.e.~$\gamma_k\simeq \gamma_1/(\log k+g)$, lies between theses two extremes.  
\begin{figure}[htb]
\centering\includegraphics[width=0.67\textwidth]{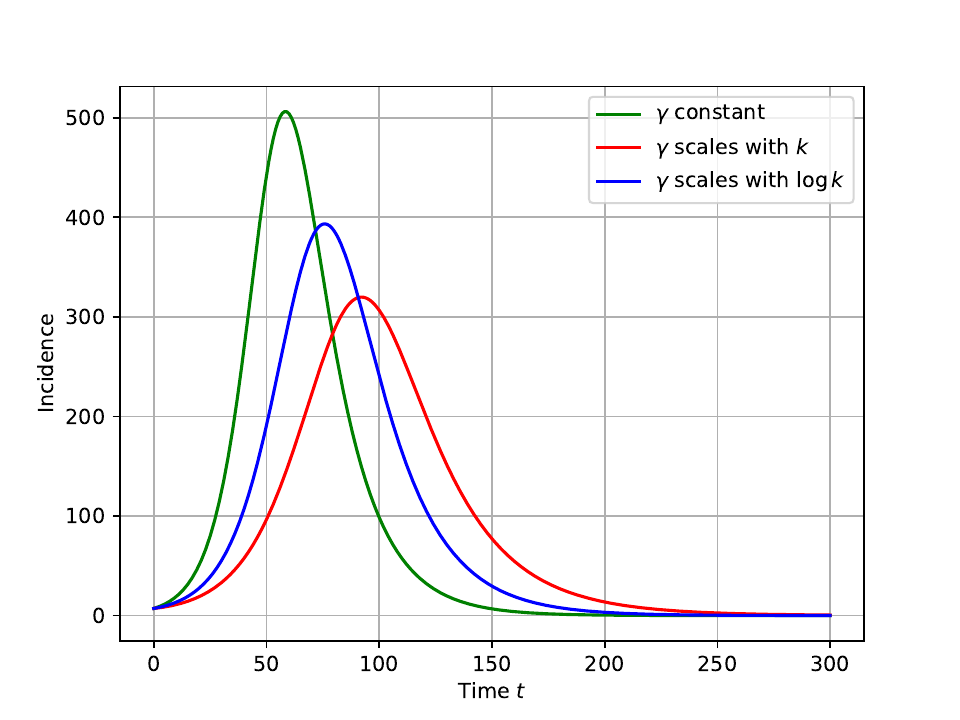}  
\caption{\label{F:I_Gamma} Simulation of one epidemic wave in case of the three different scalings of the recovery rates for  households of size $k$. Shown is the incidence, i.e.~daily new infections over time. The green curve corresponds equal recovery rates for all households, i.e.~$\gamma_k=\gamma_1$. The red curve corresponds to $\gamma_k=\gamma_1/k$ (sequential infections). The blue curve corresponds to recovery rates obtained from the maximum of exponential distributions, i.e.~$\gamma_k \simeq \gamma_1/(\log k)$.}
\end{figure}

\begin{example}
Modeling the infections inside the household by a Bernoulli--process with in--household attack rate (infection probability) $a\in [0,1]$, the total number of infected persons inside the household follows a binomial distribution
\begin{equation}
\label{E:bjk}
	 b_{j,k} := \binom{j-1}{k-1} a^{k-1} (1-a)^{j-k}\;.
\end{equation}
For the binomial in--household infection~\eqref{E:bjk} it holds, that $E_j=a (j-1)+1$.
\end{example}

\begin{theorem}\label{H:Nconst} In model~\eqref{E:SIR-HH} the total population $N=\sum_{k=1}^K k\cdot(S_k+I_k+R_k)$ is conserved.
\end{theorem}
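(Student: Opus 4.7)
The plan is to differentiate $N(t)$ and show that each type of elementary event in the model, infection splitting and recovery, preserves the total head--count. Conceptually, a single infection event removes a susceptible household of size $j$ and replaces it by a still--susceptible fragment of size $j-k$ together with a fully infected fragment of size $k$, contributing $-j + (j-k) + k = 0$ people; a recovery event merely moves a size--$k$ household from $I_k$ to $R_k$. So after differentiating we should see this cancellation term by term.

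Concretely, I first add $I_k'$ and $R_k'$ so that the $\gamma_k I_k$ contributions vanish, giving $I_k' + R_k' = Y\sum_{j=k}^K j S_j \cdot b_{j,k}$. Substituting into $dN/dt = \sum_{k=1}^K k\,(S_k' + I_k' + R_k')$ and pulling out the common factor $Y$ produces
\begin{equation*}
	\frac{1}{Y}\frac{dN}{dt} = -\sum_{k=1}^K k^2 S_k
	+ \sum_{k=1}^K \sum_{j=k+1}^K k\cdot j\, S_j\, b_{j,j-k}
	+ \sum_{k=1}^K \sum_{j=k}^K k\cdot j\, S_j\, b_{j,k}.
\end{equation*}
Swapping the order of summation in the two double sums and, in the first one, using the substitution $m = j-k$ converts $b_{j,j-k}$ into $b_{j,m}$ with multiplier $(j-m)$. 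For fixed $j$, combining with the second (swapped) sum of multiplier $k=m$, the bracket becomes $\sum_{m=1}^{j-1}(j-m)b_{j,m} + \sum_{m=1}^{j} m\, b_{j,m} = j \sum_{m=1}^{j} b_{j,m} = j$, where I use that $b_{j,\cdot}$ is a probability distribution on $\{1,\dots,j\}$. Multiplying by $jS_j$ and summing over $j$ yields $\sum_j j^2 S_j$, which exactly cancels the $-\sum_k k^2 S_k$ term.

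The only nontrivial step is the bookkeeping: keeping the index $m=j-k$ straight, noticing that the ranges $k\in\{1,\dots,j-1\}$ in the splitting sum and $k\in\{1,\dots,j\}$ in the infection sum combine cleanly because the missing $k=j$ term in the former corresponds to the household being completely infected (no susceptible residue), and invoking normalisation of $b_{j,\cdot}$. No sign tracking, no use of the specific form~\eqref{E:bjk} of the $b_{j,k}$, and no assumption on the recovery rates $\gamma_k$ are required.
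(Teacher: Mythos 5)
Your proof is correct and follows essentially the same route as the paper's: differentiate $N$, note that the $\gamma_k I_k$ terms cancel between $I_k'$ and $R_k'$, swap the order of summation in the two double sums, and substitute $m=j-k$ so that for each $j$ the splitting and infection contributions combine to $\sum_{m=1}^{j}\bigl[(j-m)+m\bigr]b_{j,m}=j$, cancelling $-\sum_k k^2 S_k$. The only cosmetic difference is that you invoke the normalisation of $b_{j,\cdot}$ directly, whereas the paper routes the same cancellation through the intermediate quantity $E_j=\sum_k k\,b_{j,k}$ and the identity $E_1=1$.
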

\begin{proof}
We have 
\begin{equation*}
	N' = Y \lsb -\sum_{k=1}^K k^2 S_k
		+ \sum_{k=1}^K k \sum_{j=k+1}^K j S_j b_{j,j-k} 
		+ \sum_{k=1}^K k \sum_{j=k}^K j S_j b_{j,k}\rsb
\end{equation*}
We consider the last two summands separately and reverse the order of summation. Hence
\begin{align*}
	\sum_{k=1}^K k \sum_{j=k}^K j S_j b_{j,k} 
	&= \sum_{j=1}^K j S_j \sum_{k=1}^j k b_{j,k} 
		= \sum_{j=1}^K j E_j S_j
\intertext{and analogously}
	\sum_{k=1}^K k \sum_{j=k+1}^K j S_j b_{j,j-k} &=
		\sum_{j=2}^K j S_j \sum_{k=1}^{j-1} k b_{j,j-k} \\
		&= \sum_{j=2}^K j S_j \sum_{l=1}^{j-1} (j-l) b_{j,l} 
		= \sum_{j=2}^K j S_j \sum_{l=1}^{j} (j-l) b_{j,l} \\
		&= \sum_{j=2}^K j S_j \lb j \sum_{l=1}^j b_{j,l} - \sum_{l=1}^j l b_{j,l} \rb 
		= \sum_{j=2}^K j^2 S_j - j E_j S_j \\
		&= \sum_{j=1}^K j^2 S_j - j E_j S_j - \lb S_1 - E_1 S_1 \rb\;.
\end{align*}		
Due to $E_1=1$, the last term vanishes. Summing all the contributions, we finally get
\begin{equation*}
	N' = Y\lsb \sum_{j=1}^K -(j^2 S_j) +  j E_j S_j + (j^2 S_j - j E_j S_j)\rsb = 0\;.
\end{equation*}
\end{proof}

\section{Basic Reproduction Number}
\label{S:R0}

To compute the basic reproduction number for the household--model~\eqref{E:SIR-HH}, we follow the next generation matrix approach by Watmough and van den Driessche, see~\cite{van2002reproduction}. We split the state variable $x=(S_1, \dots, R_K)\in \R^{3K}$ into the infected compartment $\xi=(I_1,\dots I_K)\in \R^K$ and the remaining components $\chi\in \R^{2K}$. The infected compartment satisfies the differential equation
\begin{equation*}
	\xi_k' = \calF_k(\xi,\chi) - \calV_k(\xi,\chi)
\end{equation*}
where $\calF_k = Y \sum_{j=k}^K j S_j \,b_{j,k}$, $\calV_k=\gamma_k \xi_k$ and $Y=\tfrac{1}{N} \sum_{j=1}^K \beta_j j \xi_j$. Now, the Jacobians at the disease free equilibrium $(0,\chi^\ast)$ are given by
\begin{align*}
	F_{kj} &= \frac{\pder \calF_k}{\pder\xi_j}(0,\chi^\ast) 
		= \frac{1}{N}j \beta_j \sum_{m=k}^K m H_m \, b_{m,k} \\
	V_{kk} &= \frac{\pder \calV_k}{\pder \xi_k}(0,\chi^\ast)  = \gamma_k 
		\quad \text{and $V_{kj}=0$ for $k\neq j$}
\end{align*}

The next generation matrix $G\in \R^{K\times K}$ is given by 
\begin{align*}
	G &= FV^{-1} = \frac{1}{N} 
		\lb \frac{j \beta_j}{\gamma_j}\sum_{m=k}^K m H_m\, b_{m,k} \rb_{k,j}
\intertext{Using the relation~\eqref{E:Rstar_const} we obtain}
	G &= \calR^\ast
		\lb j\sum_{m=k}^K \frac{m H_m}{N}\, b_{m,k} \rb_{k,j}
		= \calR^\ast 
		\lpm \sum_{m=1}^K \tfrac{m H_m}{N}\, b_{m,1} \\[.5ex] 
			\sum_{m=2}^K \tfrac{m H_m}{N} \, b_{m,2} \\\vdots \\
			\tfrac{K H_K}{N} \,  b_{KK} \rpm 
	\cdot \lpm 1, 2, \dots, K\rpm
\end{align*}
where the last term represents the next generation matrix as a dyadic product $G=\calR^\ast \cdot dc^T$.

The basic reproduction number $\calR=\rho(G)$ is defined as the spectral radius $\rho(G)$ of the next generation matrix $G\in \R^{K\times K}$. As a dyadic product, $G$ has rank one and hence there is only one non--zero eigenvalue.

\begin{lemma}
Let $c,d\in \R^n$ be two vectors with $c^T d\neq 0$. The non--zero eigenvalue of the dyadic product $A=dc^T\in \R^{n\times n}$ is given by $c^T d=\tr(A)$.
\end{lemma}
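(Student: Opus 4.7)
The plan is to exploit that $A = dc^{T}$ has rank at most one, so its spectrum consists of $0$ (with algebraic multiplicity $\ge n-1$) and a single potentially non-zero eigenvalue, which I will identify by exhibiting an explicit eigenvector, and then match it to the trace via the cyclicity of $\tr$.

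First I would verify directly that $d$ is an eigenvector of $A$: computing $A d = (dc^{T})d = d\,(c^{T}d) = (c^{T}d)\,d$ shows that $c^{T}d$ is an eigenvalue, and by assumption $c^{T}d \neq 0$, so this is a non-zero eigenvalue with eigenvector $d$ (which is itself non-zero, since otherwise $c^{T}d=0$).

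Next I would argue uniqueness of the non-zero eigenvalue. Since $A = dc^{T}$ factors through $\R$, its image is contained in $\spn\{d\}$, hence $\mathrm{rank}(A) \le 1$. By the rank-nullity theorem $\dim\ker A \ge n-1$, so $0$ is an eigenvalue of geometric (and thus algebraic, for a rank-one matrix) multiplicity at least $n-1$. Consequently the characteristic polynomial has the form $\lambda^{n-1}(\lambda - \mu)$ for some $\mu\in\R$, and the argument above forces $\mu = c^{T}d$.

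Finally, to identify $\mu$ with $\tr(A)$, I would use the cyclic property of the trace:
\begin{equation*}
    \tr(A) \;=\; \tr(dc^{T}) \;=\; \tr(c^{T}d) \;=\; c^{T}d .
\end{equation*}
Alternatively, this also follows from the fact that the trace equals the sum of the eigenvalues counted with algebraic multiplicity, which here reduces to the single non-zero eigenvalue $\mu$. There is really no obstacle here; the only mild subtlety is to invoke rank-one structure cleanly to justify that $0$ accounts for all remaining eigenvalues, so that the trace identity pins down $\mu$ uniquely.
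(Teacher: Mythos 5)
Your proof is correct and rests on the same key computation as the paper's, namely $Ad=(dc^{T})d=(c^{T}d)\,d$; the paper argues in the reverse direction (any eigenvector of a non-zero eigenvalue must be a multiple of $d$, forcing $\lambda=c^{T}d$), while you exhibit $d$ as an eigenvector and add the rank-nullity argument for uniqueness plus the trace identity. Both routes are valid and essentially equivalent.
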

\begin{proof}
Let $v$ be an eigenvector of $A$ to the eigenvalue $\lambda$. Then $\lambda v=Av=(dc^T)v = (c^Tv) d$, where $c^Tv\in \R$. If $\lambda\neq 0$, then $v=\frac{c^T v}{\lambda} d$. Set $\mu=\frac{c^T v}{\lambda}\in \R$. Now, $Av=A(\mu d) = \mu (dc^T)d = (c^T d) \mu d  = (c^T d) v$; hence $\lambda=c^T d$. 
\end{proof}

As an immediate consequence we obtain
\begin{theorem} The basic reproduction number of system~\eqref{E:SIR-HH} is given by
\begin{equation}
\label{E:R0}
	\calR = \calR^\ast \sum_{i=1}^K i \sum_{m=i}^K \frac{m H_m}{N} b_{m,i}
	 = \calR^\ast \sum_{m=1}^K \frac{m H_m}{N} E_m\;.
\end{equation}
\end{theorem}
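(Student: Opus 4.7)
\medskip

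The plan is to combine the dyadic-product form of the next generation matrix $G=\calR^\ast\,d c^T$ established just above the theorem with the preceding lemma. Concretely, I read off from that factorisation the vectors
\begin{equation*}
    d_k = \sum_{m=k}^K \tfrac{m H_m}{N}\,b_{m,k}, \qquad c_j = j,
\end{equation*}
so that $G_{kj}=\calR^\ast d_k c_j$. Since $G$ has rank one, its spectrum consists of $0$ (with multiplicity $K-1$) together with a single possibly non-zero eigenvalue, which by the lemma equals $c^T d$ (equivalently $\tr(G)/\calR^\ast$). Thus the spectral radius is
\begin{equation*}
    \calR = \rho(G) = \calR^\ast\, c^T d = \calR^\ast \sum_{k=1}^K k \sum_{m=k}^K \tfrac{m H_m}{N}\, b_{m,k},
\end{equation*}
which is the first equality in \eqref{E:R0}.

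For the second equality I would swap the order of summation over the triangular index set $\{(k,m):1\le k\le m\le K\}$, pulling the $m$--dependent factor $\tfrac{m H_m}{N}$ outside:
\begin{equation*}
    \sum_{k=1}^K k \sum_{m=k}^K \tfrac{m H_m}{N}\, b_{m,k}
    = \sum_{m=1}^K \tfrac{m H_m}{N} \sum_{k=1}^m k\, b_{m,k}.
\end{equation*}
The inner sum $\sum_{k=1}^m k\, b_{m,k}$ is exactly the mean of the splitting distribution $b_{m,\cdot}$, i.e.\ the expected number of infections inside a household of size $m$, which by definition equals $E_m$. Substituting this identification gives the second form in \eqref{E:R0}.

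There is essentially no obstacle here beyond bookkeeping: once the rank-one structure $G=\calR^\ast dc^T$ is in hand, the lemma does all the spectral work, and the passage from $\sum_k k\sum_m \cdots$ to $\sum_m(\cdot)E_m$ is a straightforward Fubini-style rearrangement. The only minor check to flag is that $c^T d\neq 0$ (so that the lemma's hypothesis applies and the spectral radius is indeed $c^T d$ rather than $0$); this is immediate since every $b_{m,m}\ge 0$ and $b_{1,1}=1$, so $d_1\ge H_1/N>0$ whenever there is at least one single-person household, and more generally $E_m\ge 1$ forces the expression in \eqref{E:R0} to be strictly positive as soon as the population is non-empty.
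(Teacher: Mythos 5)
Your proposal is correct and follows exactly the paper's route: the theorem is stated as an immediate consequence of the dyadic factorisation $G=\calR^\ast dc^T$ together with the lemma identifying the non-zero eigenvalue with $c^Td$, and the second equality is the same interchange of summation using $E_m=\sum_{k=1}^m k\,b_{m,k}$. The extra remark about $c^Td\neq 0$ is a reasonable (if minor) point the paper leaves implicit.
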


\begin{corollary}
In the particular situation, when the expected number of infections inside a household of size $m$ is given by $E_m=a(m-1)+1$, the basic reproduction number equals
\begin{equation}
\label{E:tildeR}
	\calR = \calR^\ast\lsb 1+a\lb \frac{\mu_2}{\mu_1}-1\rb\rsb\;.
\end{equation}
\end{corollary}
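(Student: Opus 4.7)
The plan is a direct calculation starting from formula \eqref{E:R0} for the basic reproduction number. I would substitute the assumed form $E_m = a(m-1)+1$ into
\begin{equation*}
    \calR = \calR^\ast \sum_{m=1}^K \frac{m H_m}{N}\, E_m
\end{equation*}
and expand to obtain
\begin{equation*}
    \calR = \calR^\ast\lsb a\sum_{m=1}^K \frac{m^2 H_m}{N} + (1-a)\sum_{m=1}^K \frac{m H_m}{N}\rsb.
\end{equation*}

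Next I would rewrite both sums in terms of the moments $\mu_1,\mu_2$ of the household size distribution. Using $H_m = H\,h_m$ and $N = \sum_m m H_m = H\mu_1$, one immediately gets
\begin{equation*}
    \sum_{m=1}^K \frac{m H_m}{N} = \frac{H\mu_1}{H\mu_1} = 1, \qquad \sum_{m=1}^K \frac{m^2 H_m}{N} = \frac{H\mu_2}{H\mu_1} = \frac{\mu_2}{\mu_1}.
\end{equation*}

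Substituting these two identities into the expression for $\calR$ yields $\calR = \calR^\ast\lsb a\,\mu_2/\mu_1 + (1-a)\rsb$, which after factoring is precisely \eqref{E:tildeR}. There is no real obstacle here: the entire argument is algebraic manipulation, and the only subtlety is remembering that the weights $m H_m/N$ form a probability distribution on household sizes (weighted by population share rather than household share), so that summing $m$ against them gives $\mu_2/\mu_1$ rather than $\mu_2$.
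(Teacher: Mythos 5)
Your proposal is correct and follows exactly the route the paper intends: the corollary is left as an immediate computation from \eqref{E:R0}, and your substitution of $E_m=a(m-1)+1$ together with the identities $\sum_m mH_m/N=1$ and $\sum_m m^2H_m/N=\mu_2/\mu_1$ is precisely that computation.
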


Assuming, that the infection of any member of a household results in the infection of the entire household, i.e.~in--household attack rate $a=1$, our result $\calR=\frac{\mu_2}{\mu_1}\calR^\ast$ agrees with the result obtained by Becker and Dietz in~\cite[Sect. 3.2]{becker1995effect}. In the appendix~\ref{S:App1} we show, that a random graph model for the household infection process leads to the same result for the reproduction number as epidemic threshold, see~\eqref{E:normT}.

\section{Computing the prevalence}
\label{S:Prevalence}

Let $z = \frac{1}{N}\sum_{k=1}^K k R_k$ denote the fraction of recovered individuals. Then $z$ satisfies the ODE
\begin{align*}
	z' &= \frac{Y}{\calR^\ast}\;.
\end{align*}

In the sequel we will derive an implicit equation for the prevalence $\lim_{t\to\infty} z(t)$ in two special cases:
\begin{enumerate}
\item for maximal household size $K=3$. The procedure used here allows for immediate generalization but the resulting expression gets lengthy and provide only minor insight into the result.
\item for in--household attack rate $a=1$ and arbitrary household sizes. In this setting the equations~\eqref{E:SIR-HH-S} for the susceptible households decouple and allow and complete computation of the equation for the prevalence.
\end{enumerate}

We will start with the second case. Let us consider $a=1$, i.e.~inside households infections are for sure and $E_k=k$. Then the ODE for the susceptible households reads as $S_k'=-Y k S_k$ and we can insert the  recovered $z$ 
\begin{align*}
	S_k' &= -k \calR^\ast S_k z\;.
\intertext{After integration with respect to $t$ from $0$ to $\infty$, we get}
	\ln \frac{S_k(\infty)}{S_k(0)} &= k\calR^\ast\lb z(0)-z(\infty) \rb\;.
\end{align*}
Assuming initially no recovered individuals, i.e.~$z(0)=0$ and considering the total population at the the end of time, i.e.~$N=Nz(\infty) + \sum_k k S_k(\infty)$, we arrive at the system
\begin{align}
	N &= Nz(\infty) + \sum_{k=1}^K k S_k(0) e^{-k \calR^\ast z(\infty)}\;. \notag
\intertext{Scaling with $N$, i.e.~introducing $s_{k,0} = S_k(0)/N$, we get}
	1 &= z+ \sum_{k=1}^K k\, s_{k,0}\, e^{-k \calR^\ast z}\;. \label{E:preval}
\intertext{In the limit $\calR^\ast z \ll 1$ we obtain the approximation}
	z &\sim \frac{1-\sum_k k s_{k,0}}{1-\calR^\ast \sum_k k^2 s_{k,0}}\;. \notag
\end{align}
\begin{figure}[htb]
\centering\includegraphics[width=0.67\textwidth]{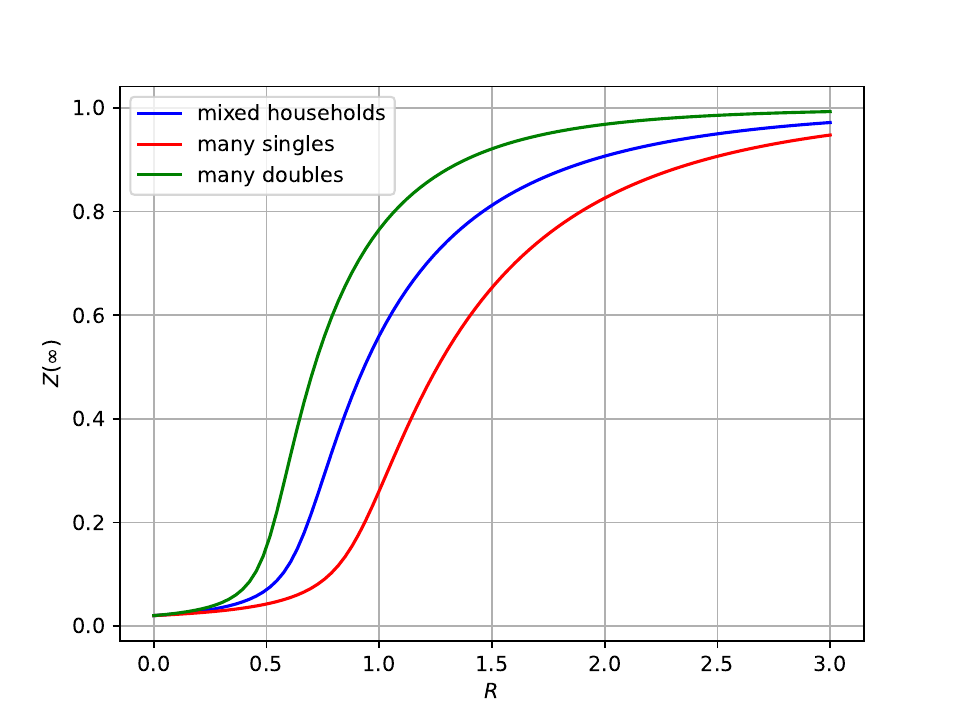}  
\caption{\label{F:preval1} Prevalence in case of maximal household size $K=2$ vs. reproduction number. If double households dominate (green), the prevalence is larger than in the case of mostly single households (red), since households speed up the infection dynamics.}
\end{figure}
Figure~\ref{F:preval1} shows the numerical solution of the prevalence equation~\eqref{E:preval} in case of $K=2$ for reproduction numbers $\calR^\ast \in [0, 3]$ and initially $2\%$ of the entire population being infected. The three different graphs correspond to different initial values for the susceptible households: both single and double households contain $49\%$ of the population (blue), $89\%$ of susceptibles live in single households and only $9\%$ in double households (red) or just $9\%$ in single households and $89\%$ in double households (green).

In case of arbitrary in--household attack rates $a\in [0,1]$, we consider the situation for $K=3$. Setting $z=Z/N$, the relevant equations read as 
\begin{align*}
	S_3' &= -3 \calR^\ast z'\,  S_3 \\
	S_2' &= \calR^\ast z' \lb -2 S_2 + 3 b_{3,1} S_3\rb\\
	S_1' &= \calR^\ast z' \lb -S_1 +2b_{2,1}S_2 + 3b_{3,2}S_3\rb\;.
\end{align*}
Solving the equations successively starting with $S_3(t) = S_3(0) e^{-3\calR^\ast z(t)}$ and using variation of constants, we arrive at
\begin{align*}
	S_2(t) &= S_2(0) e^{-2\calR^\ast z(t)} 
		+ 3b_{3,1} S_3(0) \lb 1-e^{-\calR^\ast z(t)}\rb e^{-2\calR^\ast z(t)} \\
	S_1(t) &= S_1(0) e^{-\calR^\ast z(t)} 
		+ 2 b_{2,1}\lb S_2(0) + 3 b_{3,1} S_3(0)\rb \lb 1-e^{-\calR^\ast z(t)}\rb e^{-\calR^\ast z(t)} \\
		& \qquad + \tfrac{3}{2} \lb b_{3,2} - 2 b_{2,1} b_{3,1}\rb S_3(0) \lb 1-e^{-2\calR^\ast z(t)}\rb e^{-\calR^\ast z(t)}\;.
\end{align*}
For the prevalence $z=\lim_{t\to\infty} z(t)$ we obtain the implicit equation
\begin{multline}
	\label{E:z}
	1 = z+ \lsb s_{1,0} + 2b_{2,1}s_{2,0} + 3(b_{2,1}b_{3,1}+\tfrac12 b_{3,2})s_{3,0}\rsb e^{-\calR^\ast z} \\
	\qquad + 2(1-b_{2,1}) \lsb s_{2,0} + 3 b_{3,1} s_{3,0}\rsb e^{-2\calR^\ast z} \\
	+ 3\lsb 1+ (b_{2,1}-2) b_{3,1}- \tfrac12 b_{3,2}\rsb s_{3,0} e^{-3\calR^\ast z}
\end{multline}
or in short
\begin{equation*}
	1 = z + c_1 e^{-\calR^\ast z} + c_2 e^{-2\calR^\ast z} + c_3 e^{-3\calR^\ast z}\;, 
\end{equation*}
where the coefficients $c_1, c_2$ and $c_3$ are the above, rather lengthy expressions involving the initial conditions and the in--household infection probabilities $b_{j,k}$. In Appendix~\ref{S:App2} we will derive expressions for the prevalence using tools from random graph models. This alternative setting allows to compute the same results almost explicitly even for lager household sizes $k\ge 3$. In case of $K=2$, i.e.~setting $s_{3,0}=0$, we get
\begin{equation}
	\label{E:preval2}
	1 = z + \lb s_{1,0}+2b_{2,1} s_{2,0}\rb e^{-\calR^\ast z} 
		+ 2 \lb 1-b_{2,1}\rb s_{2,0} e^{-2\calR^\ast z}\;.
\end{equation}
In case of arbitrary household size $K>3$, the resulting equation for the prevalence will have the same structure. 

For the binomial infection distribution with $a=1$ this reduces to
\begin{align*}
	1 &= z+ s_{1,0}e^{-\calR^\ast z}+ 2s_{2,0}e^{-2\calR^\ast z}+ 3s_{3,0}e^{-3\calR^\ast z}
\intertext{and in case of $a=0$ we arrive at}
	1 &= z+ \lb s_{1,0}+ 2 s_{2,0} + 3 s_{3,0}\rb e^{-\calR^\ast z}\;.
\end{align*}

In case of small initial infections, i.e.~$s_{1,0}+2s_{2,0}+3s_{3,0}=c_1+c_2+c_3\approx 1$, Eqn.~\eqref{E:z} allows for the trivial, disease free solution $z=0$. However, above the threshold $\calR_c = \frac{1}{c_1+2c_2+3c_3}$ the non--trivial endemic solution shows up. Expanding the exponential for $\calR^\ast z\ll 1$, we arrive at
\begin{gather*}
	1 \approx z + \lb c_1+c_2+c_3 \rb - \calR^\ast z \lb c_1+2c_2+3c_3\rb + \frac{{\calR^\ast}^2 z^2}{2} \lb c_1+ 4c_2 + 9 c_3\rb 
	\intertext{and hence}
	z \lb \calR^\ast (c_1+2c_2+3c_3)-1\rb \approx \frac{{\calR^\ast}^2 z^2}{2} \lb c_1+ 4c_2 + 9 c_3\rb\;.
\end{gather*}	
Besides the trivial solution $z=0$, this approximation has the second solution
\begin{equation}
\label{E:preval3_appx}
	z\simeq \frac{2\lb \calR^\ast (c_1+2c_2+3c_3) -1 \rb}{{\calR^\ast}^2 \lb c_1+4c_2+9c_3\rb}\;.
\end{equation}
If $\calR^\ast>\calR_c = (c_1+2c_2+3c_3)^{-1}$, this second root is positive.

The following Figure~\ref{F:preval}(left) shows the numerical solution of the prevalence equation~\eqref{E:preval2} in case of $K=2$ for reproduction numbers $\calR^\ast \in [0, 3]$ and initially $2\%$ of the entire population being infected while both single and double households contain $49\%$ of the susceptible population. The different curves show the results depending on the in--household attack rate $a$. The graph on the right depict the case $K=3$ and $s_{1,0}=s_{2,0}=s_{3,0}=\frac{1}{6}$ for three different in--household attack rates $a$. The approximation~\eqref{E:preval3_appx} is shown by the dashed curves. For $\calR^\ast<\calR_c$, the trivial disease free solution $z=0$ is the only solution.
\begin{figure}[htb]
\includegraphics[width=0.47\textwidth]{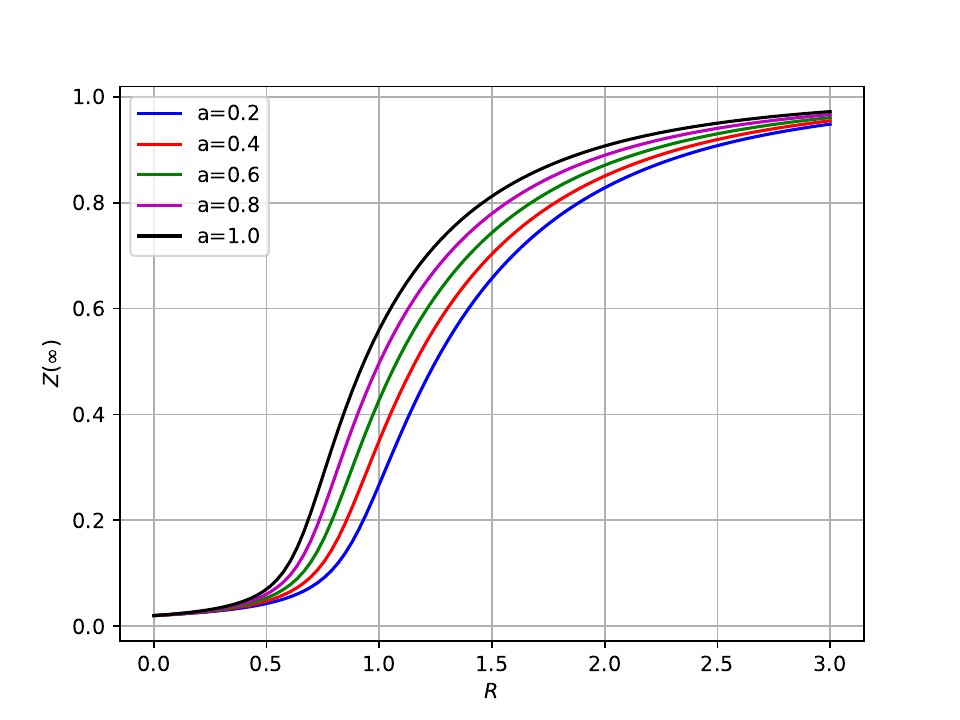}\hfill
\includegraphics[width=0.47\textwidth]{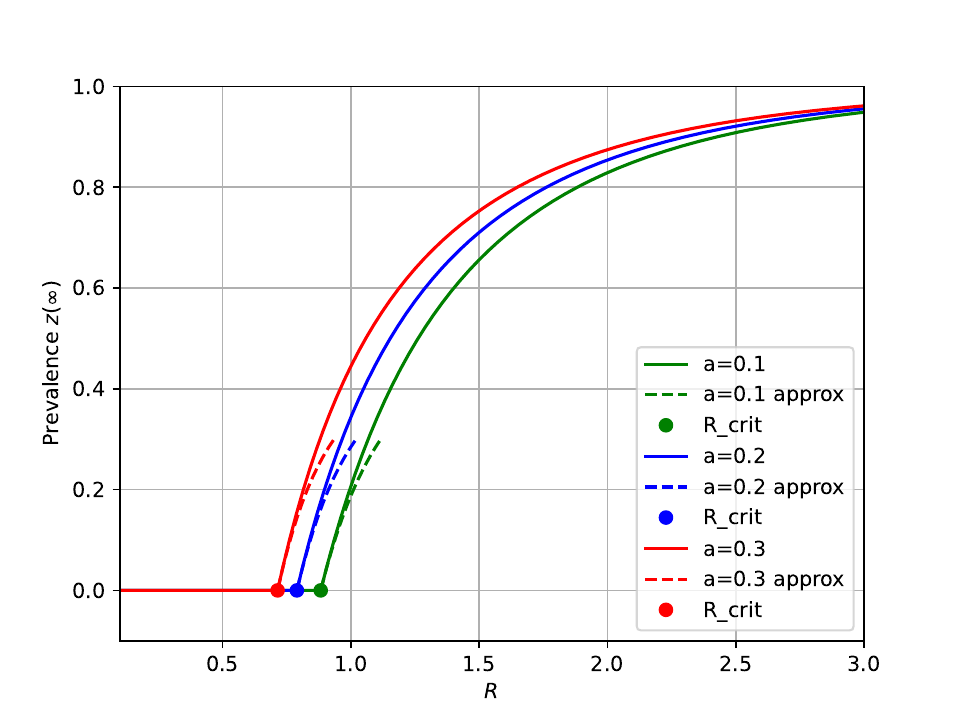}  
\caption{\label{F:preval} Visualization of the prevalence vs.~out--household reproduction number $\calR^\ast$ and for  different in--household attack rate $a$. (Left:) Maximal household size $K=2$. (Right:) Maximal household size $K=3$. Also shown are the asymptotic approximation~\eqref{E:preval3_appx}  and the critical value $\calR_c$.}
\end{figure}

\section{Computing the peak of the infection}
\label{S:Peak}

Let $J= \sum_{k=1}^K k I_k$ denote the total number of infected. Then $J$ satisfies
\begin{align*}
	J' &= \sum_{k=1}^K k I_k' = Y \lsb -\frac{N}{\calR^\ast} + \sum_{k=1}^K k \sum_{j=k}^K j S_j b_{j,k} \rsb
		= Y \lsb -\frac{N}{\calR^\ast} + \sum_{j=1}^K j S_j \sum_{k=1}^j k b_{j,k} \rsb \\
		&= Y \lsb -\frac{N}{\calR^\ast} + \sum_{j=1}^K j S_j E_j \rsb\;.
\end{align*}
For $K=2$ we have to consider the problem
\begin{align*}
	S_1' &= Y\lsb -S_1 + 2S_2 b_{2,1} \rsb \\
	S_2' &= Y\lsb -2 S_2\rsb \\
	J' &= Y \lsb -\frac{N}{\calR^\ast} + S_1 + 2 E_2 S_2 \rsb\;.
\end{align*}
Note, that $E_2 = b_{2,1}+2b_{2,2}$ and $b_{2,1}+b_{2,2}=1$, hence $b_{2,1}=2-E_2$. For sake of shorter notation and easier interpretation, we introduce the scaled compartments $s_1=S_1/N$, $s_2=S_2/N$ and $j=J/N$. Writing $s_1$ as a function of $s_2$, we get
\begin{align*}
	\frac{d s_1}{d s_2} &= E_2-2 + \frac{s_1}{2 s_2}
\intertext{with the solution}
	s_1 &= c \sqrt{s_2} - 2 (2-E_2) s_2\;,
\end{align*}
where $c= 2 (2-E_2) \sqrt{s_{2,0}} + s_{1,0}/\sqrt{s_{2,0}}$.
For $j$ we have the equation
\begin{align*}
	\frac{d j}{d s_2} &= \frac{-1/\calR^\ast+ s_1 +2 E_2 s_2}{-2 s_2}
	=  \frac{1}{2\calR^\ast s_2} - \frac{d s_1}{d s_2} - E_2
\intertext{with the solution given by}
	j &= \frac{1}{2\calR^\ast} \ln \frac{s_2}{s_{2,0}} - s_1(s_2) - E_2 (s_2- s_{2,0})\;.
\end{align*}
The maximum of $j$ is either attained in case of an under--critical epidemic at initial time at the necessary condition $j'=0$ has to hold. Inserting $s_1$ as a function of $s_2$ we arrive at the quadratic equation
\begin{gather*}
	4 (E_2-1) x^2 + cx - \frac{1}{\calR^\ast} = 0
\intertext{ for $x=\sqrt{s_2}$ with the positive root}
	x= \sqrt{s_2} = \frac{c}{8(E_2-1)} 
		\lb \sqrt{1+\frac{16}{\calR^\ast} \frac{E_2-1}{c^2}}-1\rb\;.
\end{gather*}
This solution is only meaningful, if $s_1+2s_2\le 1$, i.e.
\begin{equation}
\label{E:Cond_for_R}
	\frac{1}{\calR^\ast} \le 1 + 2(E_2-1)x^2
\end{equation}
which is an implicit equation for a threshold value of $\calR^\ast$.

To obtain the value of $j$ at the maximum, we plug this root into the above solution for $j$ and arrive at
\begin{align*}
	j_{max} &= 1-\frac{1}{2\calR^\ast}\lb 1+\ln \frac{s_{2,0}}{x^2}\rb - \frac{cx}{2}\;.
\end{align*}
Hence, the peak of infection $j_{max}$ is a function of the out--household reproduction number $\calR^\ast=\beta_k/\gamma_k$, the expected infections in double households $E_2$ and the initial conditions $s_{2,0}$, $s_{1,0}$. In particular, it is independent of the scaling of the recovery periods $\gamma_k$, as can be seen also in the more complex setting presented in Figure~\ref{F:I_Gamma}.

The following Figure~\ref{F:JMax_ana} shows the peak of infection $j_{max}$ versus the out--household reproduction number $\calR$. Solutions are only plotted, if the condition~\eqref{E:Cond_for_R} is satisfied. The left figure shows the situation for $E_2=1.5$ and different initial conditions. The right figure shows the variation with respect to $E_2=1+a$ keeping the initial conditions fixed as $s_{1,0}=0.5$ and $s_{2,0}=0.25$.
\begin{figure}[htb]
\includegraphics[width=0.47\textwidth]{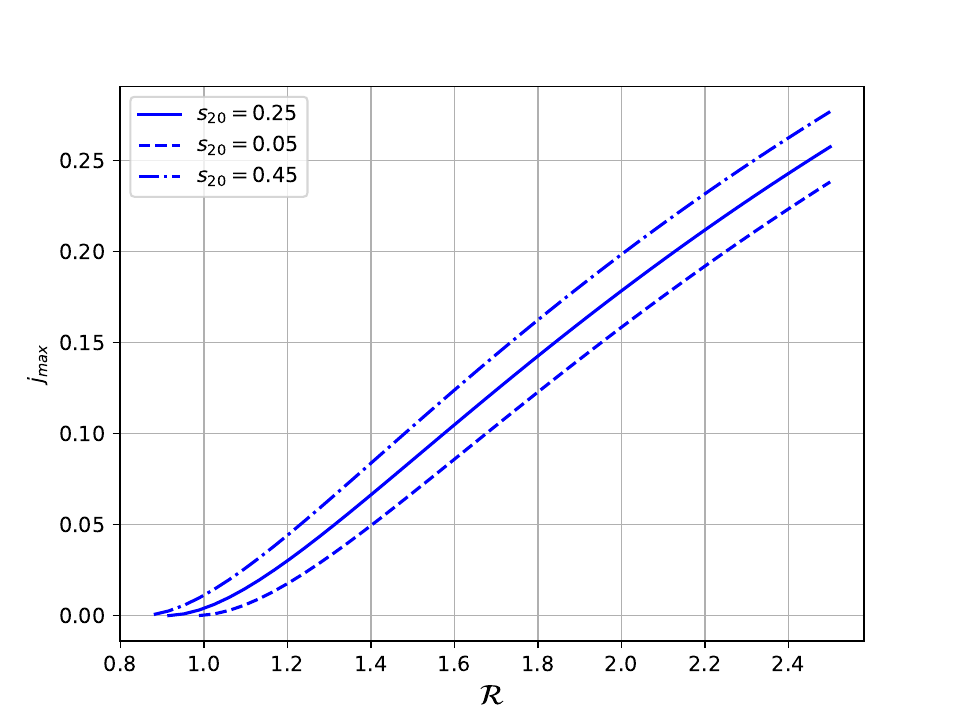}
\hfill
\includegraphics[width=0.47\textwidth]{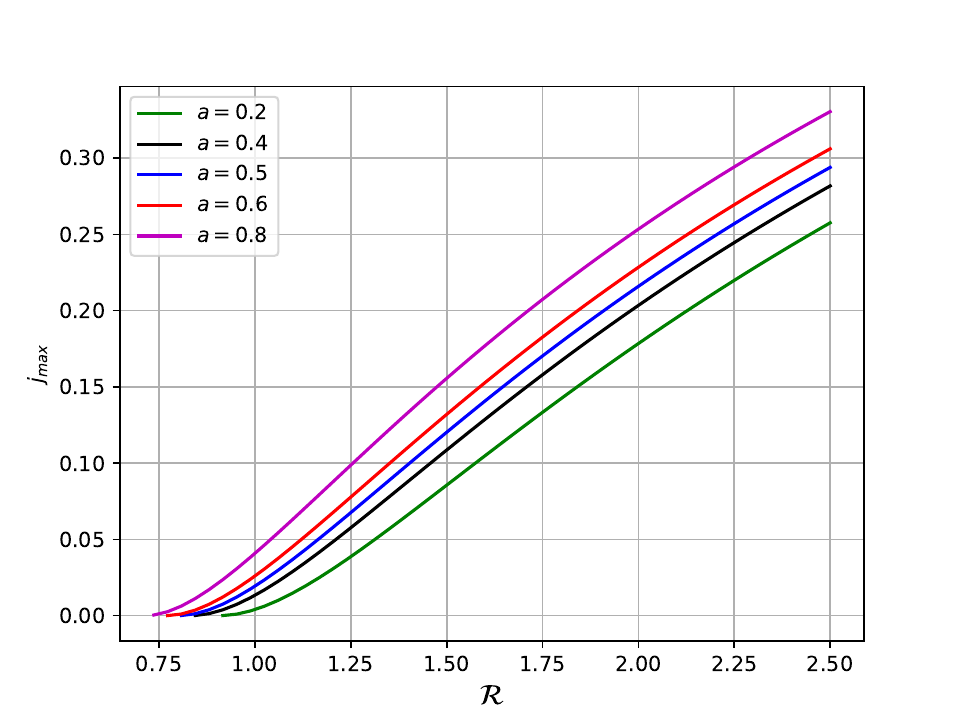}
\caption{\label{F:JMax_ana} Peak height $j_{\max}$ of a single epidemic wave vs.~out--household reproduction number $\calR^\ast$. (Left:) Variation with respect to different initial household size distributions. (Right:) Variation with respect to different in--household attack rates.}
\end{figure}

\section{Simulations and comparison with agent--based model}

The analysis of our household model presented in the previous sections, shows that larger households have a strong effect on the spread of the epidemic. To illustrate this, we simulate a single epidemic wave in populations with close to realistic household distribution. For comparison we choose the household size distributions for Bangladesh (BGD), Germany (GER) and Poland (POL) published by the UN statistics division in 2011, see~\cite{unstats2015demographic}. At the time of writing this paper, the 2011 data sets are the most recent provided by the United Nations for the three countries. Using a sample of almost $1\,000\,000$ individuals, we obtain the  distribution shown in Table~\ref{T:HHsize}.
\begin{table}[h]
{\small
\begin{tabular}{r|r|rrrrrr}
	&& \multicolumn{6}{c}{Number $H_k$ of households of size}\\
	& Total & $k=1$	& $k=2$	& $k=3$	& $k=4$	& $k=5$	& \multicolumn{1}{c}{$k\ge 6$} \\
	\hline
Bangladesh & 999\,995 & 7\,366      & 24\,351     & 44\,022     & 55\,989     & 42\,037     & 53\,960 ($k=7$) \\

Germany	& 999\,999	& 173\,640	& 154\,920	& 67\,846		& 48\,585		& 15\,201		& 7\,106 ($k=6$) \\
Poland	& 999\,996	& 85\,032	& 91\,220	& 71\,221	& 57\,605	& 26\,156	& 22\,523 ($k=7$) \\

\end{tabular}}
\caption{\label{T:HHsize} Distribution of household sizes in Bangladesh, Germany and Poland for the year 2011, see~\cite{unstats2015demographic}. The sample is based on a scaled population of $1\,000\,000$. Differences are due to round--off effects.}
\end{table}

\begin{remark}
Note, that in Table~\ref{T:HHsize} we have redistributed for Poland and Bangladesh the fraction of population living in households of size $6$ or bigger to household of size equal $7$ to match the total population. How one treats the population represented by the tail in the household distribution can make a substantial difference in the simulation outcomes. To visualize that, we show in Figure~\ref{F:Tail} simulations for out--household reproduction number $\calR^\ast=0.9$, in--household attack rate $a=0.2$ and three different versions of how to treat the tail in the Polish household size distribution. The blue curve represents the population distribution given in Table~\ref{T:HHsize}. The green curve uses the extended census data including households up to size $10+$. All households of size $10+$ are treated as households of size \emph{equal} to $10$. The red curve show a simplified treatment of the tail, where all households of size $6+$ are treated as being of size \emph{equal} to $6$.

The simulation redistributing the households of size $6+$ to size $7$ (blue) agrees within the simulation accuracy with the detailed distribution (green). The simplified treatment (red) shows already a significant difference. This difference will be even more pronounced for smaller reproduction numbers $\calR^\ast$ since the smaller households first get subcritical with decreasing $\calR^\ast$. This result clearly visualizes the need for careful treatment of the tail in the household size distribution, since big households have an over--proportional impact on the dynamics. Unfortunately, most publicly available data truncates the household size distribution at size $6$. From a modeling and simulation point of view, there is a need for more detailed data on the household size distribution including information for households of size bigger than $6$.
\begin{figure}[htb]
\centerline{\includegraphics[width=0.67\textwidth]{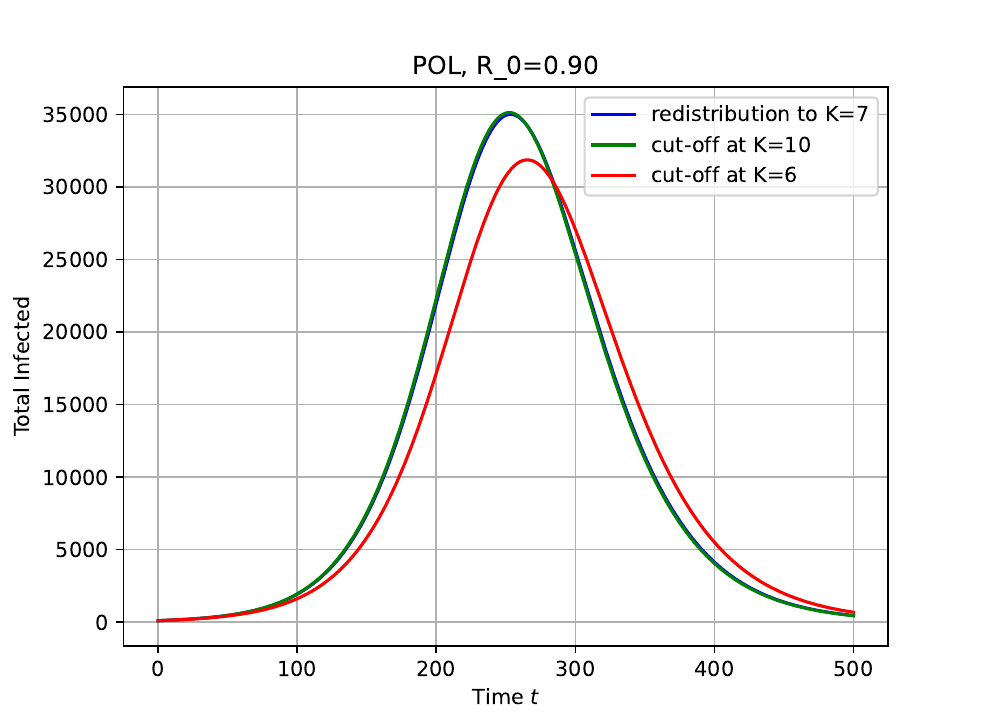}}
\caption{\label{F:Tail} Simulation of one infection wave using different treatment of the tail in the household size distribution. (Blue:) Population distribution given in Table~\ref{T:HHsize} where households of size $6+$ are redistributed to size $7$. (Green:) Polish census data including household sizes up to $10+$. (Red:) Households of size $6+$ treated as being of size equal to $6$.}
\end{figure}
\end{remark}

As initial condition for our simulation we assume $100$ infected single households. The recovery rates inside the households are described using the model of parallel infections, i.e.~$\gamma_k=\gamma_1/(1+1/2+ \cdots + 1/k)$. For the out--household reproduction number $\calR^\ast$ we consider the range $0.6\le \calR^\ast\le 2.3$  and the in--household attack is chosen as $a=0.25$. The ODE--system~\eqref{E:SIR-HH} is solved using a standard RK4(5)--method. 

Figures~\ref{F:Cmp_1} and \ref{F:Cmp_2} show a comparison of the three countries. The prevalence and the maximum number of infected shown in Fig.~\ref{F:Cmp_1} clearly visualize that large households are drivers of the infection. For moderate out--household reproduction number $\calR^\ast\simeq 1$, Bangladesh, with an average household size $\mu_1=4.4$, faces a relative prevalence being approx.~$50\%$ higher than Germany with an average household size of $\mu_1=2.1$ persons. Also the peak number of infected is almost twice as high in Bangladesh compared to Germany for $\calR^\ast\in [1, 1.25]$. Fig.~\ref{F:Cmp_2} shows the simulation of a single infection wave for moderate out--household reproduction number $\calR^\ast=1.1$ and in-household attack rate $a=0.25$. The graph illustrates the faster and more severe progression of the epidemics in case of larger households. The peak of the wave occurs in Bangladesh $60$ days earlier and affects almost double the number of individuals compared to Germany.

\begin{figure}[htb]
\includegraphics[width=0.45\textwidth]{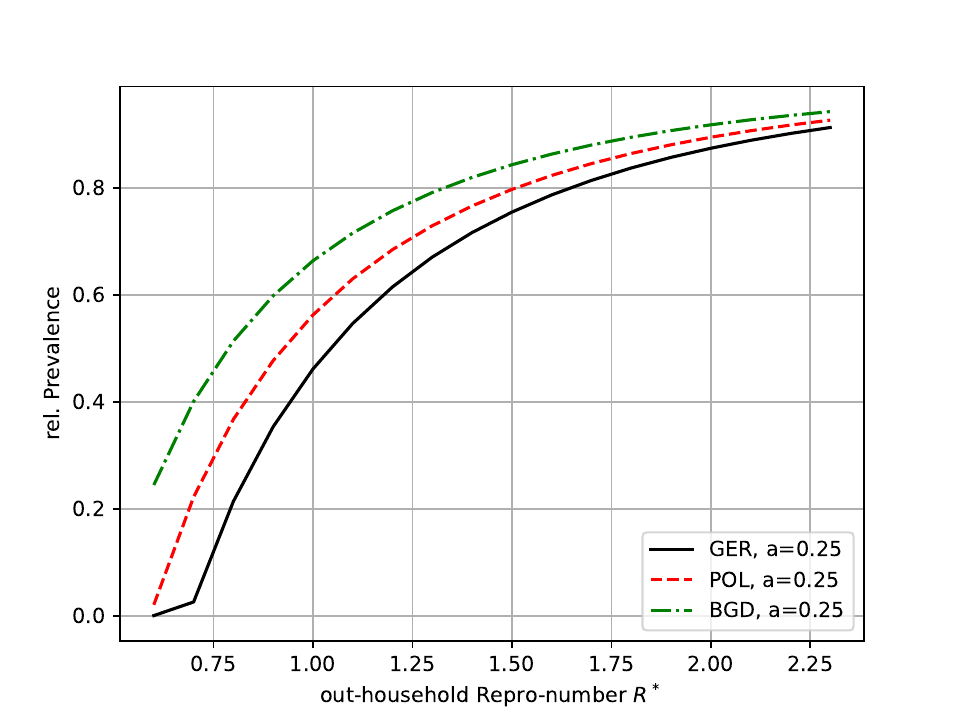} \hfill
\includegraphics[width=0.45\textwidth]{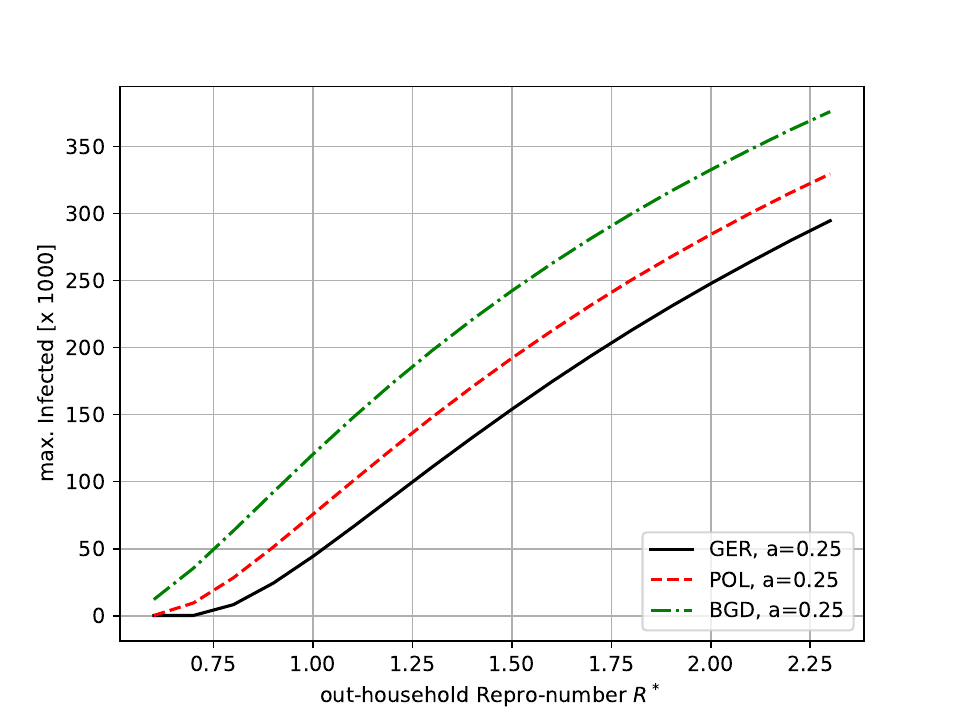}
\caption{\label{F:Cmp_1} Simulation for countries with different household size distributions according to Table~\ref{T:HHsize}. The $x$--axis shows the out--household reproduction number $\calR^\ast$. The in--household attack rate is fixed as $a=0.25$. Left: Relative Prevalence $z$ after $T=2\,000$ days. Right: Absolute height of peak number of infected $J_{\max}$ (in thousands).}
\end{figure}

\begin{figure}[htb]
\centerline{\includegraphics[width=0.67\textwidth]{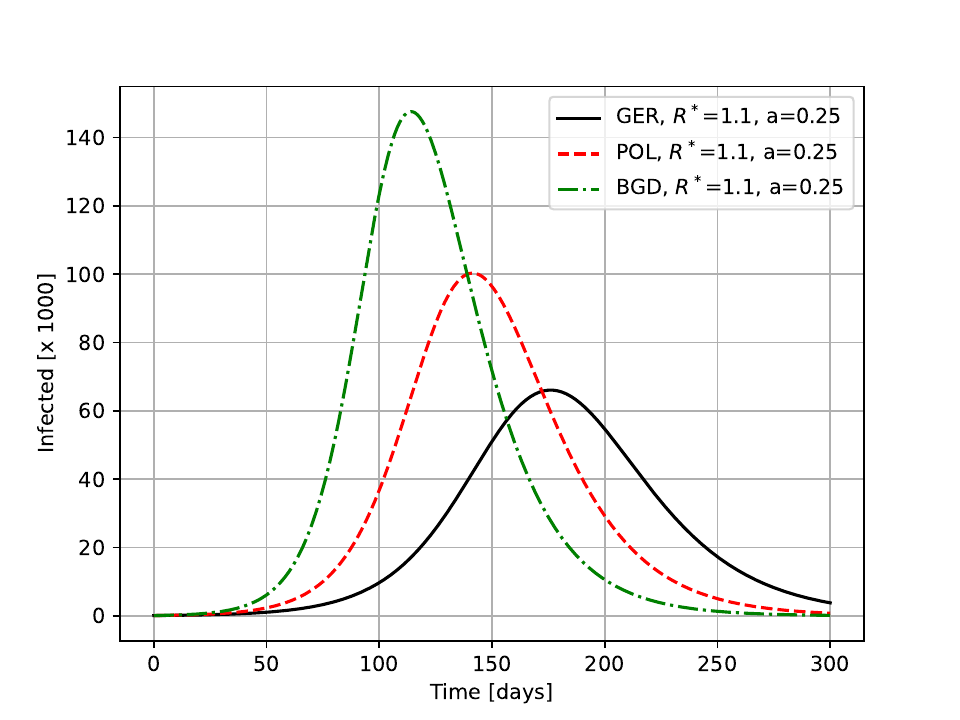}}
\caption{\label{F:Cmp_2} Simulation of one infection wave for countries with different household size distributions according to Table~\ref{T:HHsize}. Out--household reproduction number $\calR^\ast=1.1$ and in--household attack rate $a=0.25$. Plotted is the total number of infected $J$ versus time.}
\end{figure}

To validate our model, we compared it to a stochastic, microscopic agent--based model developed at the  Interdisciplinary Centre for Mathematical and Computational Modelling at the University of Warsaw. Complete details of this model are given in~\cite{niedzielewski2022overview}. In this model agents have certain states (susceptible, infected, recovered) and infection events occur in certain context, i.e.~inside a single household or between different households. Since the ODE--model~\eqref{E:SIR-HH} is a variant of an SIR--model, the agent--based model also just uses the SIR--states and ignores all other states. The agent--based model uses for each infected individual a recovery time that is sampled from an exponential distribution with mean $10$ days. Based on the household distribution for Poland, Figure~\ref{F:POL_prev} provides a comparison of the computed prevalence vs.~the out--household reproduction number for our model~\ref{E:SIR-HH} and the agent--based model (blue squares). The solid lines show the results of the ODE--model for different in--household attack rates. The relative prevalence plotted in the figure is defined as the total number of recovered individuals for time $t\to \infty$; here we use $T=2\,000$ days for practical reasons. 

\begin{figure}[htb]
\centerline{\includegraphics[width=0.67\textwidth]{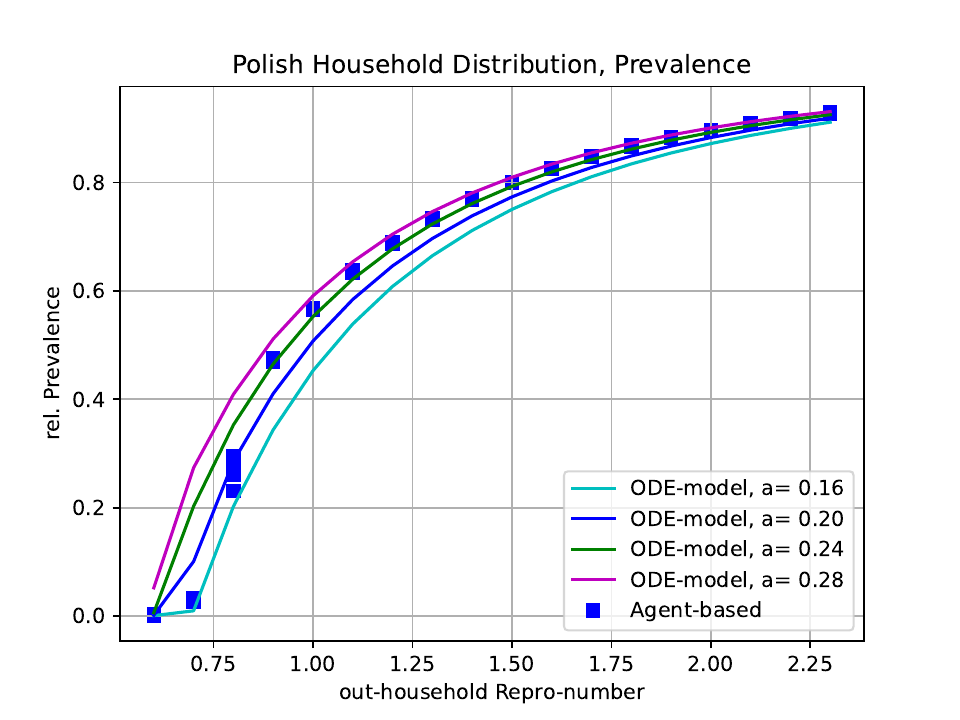}}
\caption{\label{F:POL_prev} Simulation of one infection wave, plotted is the relative prevalence $z$ versus the out--household reproduction number $\calR^\ast$. Initially $100$ infected single households in a population according to Table~\ref{T:HHsize}. The solid lines show the results of the ODE model for different in--household attack rates. The results of the agent--based model are shown by the blue squares.}
\end{figure}

Figure~\ref{F:POL_peak} shows the results for the peak of the infection wave. The graph on the left shows the peak of the incidences, i.e.~the maximum number of daily new infections and the graph on the right shows the time, when this peak occurs. 
\begin{figure}[htb]
\includegraphics[width=0.45\textwidth]{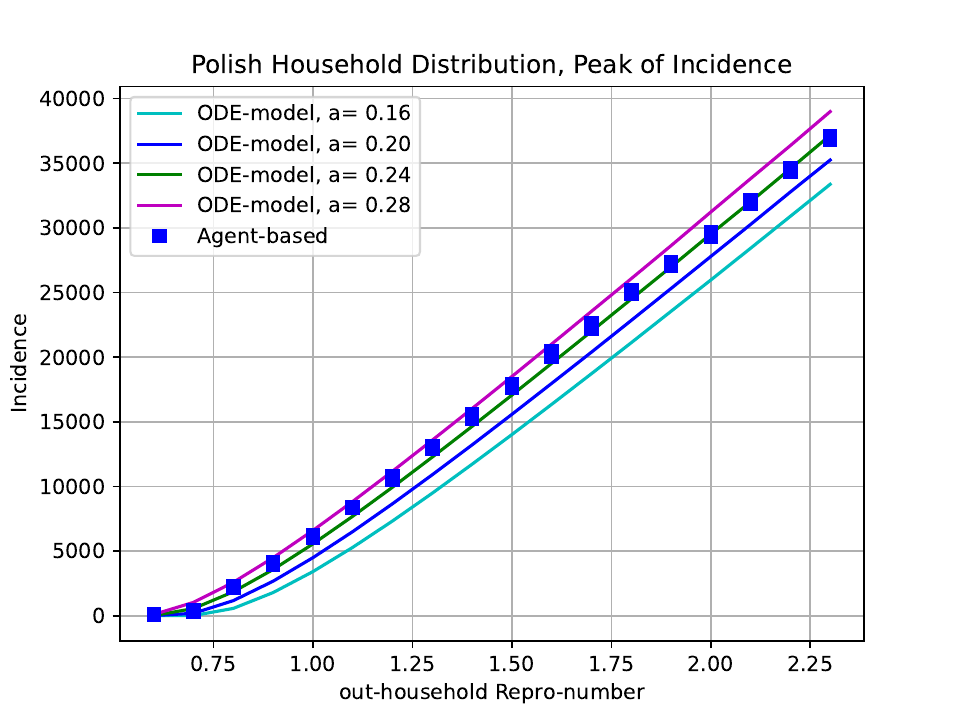} \hfill
\includegraphics[width=0.45\textwidth]{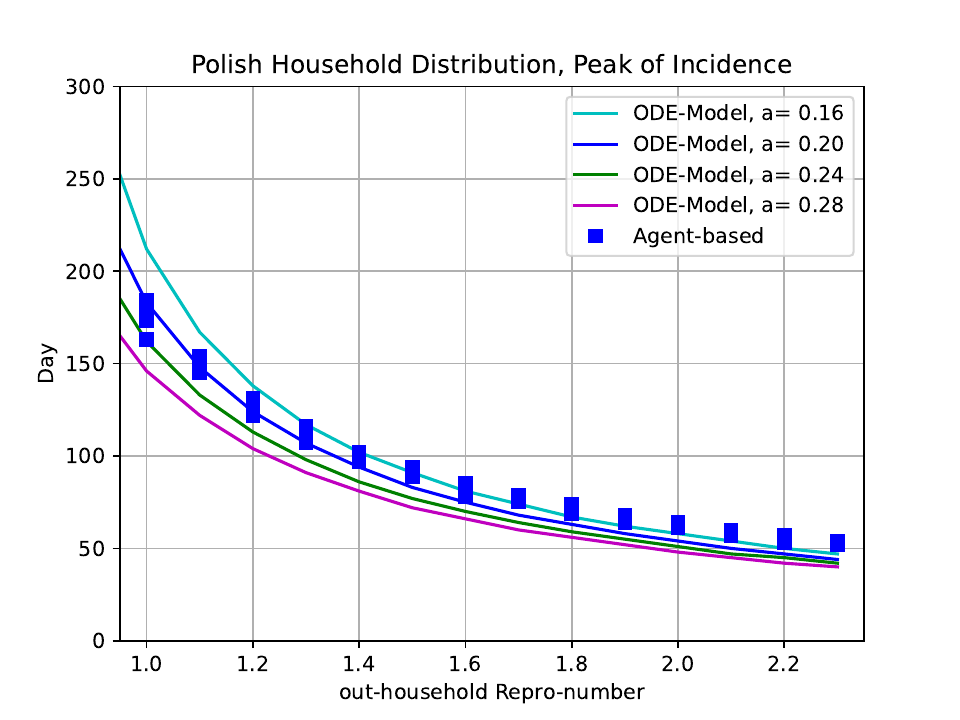}
\caption{\label{F:POL_peak} Simulation of one infection wave, initially $100$ infected single households in a population according to Table~\ref{T:HHsize}. Left: Peak number of daily new infections versus the out--household reproduction number $\calR^\ast$. Right: Time, when the peak occurs vs.~$\calR^\ast$. The solid lines show the results of the ODE model for different in--household attack rates. The results of the agent--based model are shown by the blue squares.}
\end{figure}

For all three presented criteria: prevalence, peak height and peak time, our ODE--household model~\eqref{E:SIR-HH} matches quite well with the agent--based simulation. Best agreement can be seen for in--household attack rate $a$ in the range between $0.16$ and $0.20$.

\section{Effect of household quarantine}

In this section we will extend our basic model~\eqref{E:SIR-HH} to households put under quarantine after an infection is detected. Therefore, we introduce the additional compartments $Q_k$ of quarantined households of size $k$. The extended household model reads as
\begin{subequations}
\label{E:SIQR-HH}
\begin{align}
	S_k' &= Y \lsb - k S_k + \sum_{j=k+1}^K\!\! j S_j \cdot b_{j,j-k} \rsb \\
	I_k' &= - \gamma_k I_k - q_k I_k + Y \sum_{j=k}^K j S_j \cdot  b_{j,k} \\
	Q_k' &= q_k I_k - \gamma_k Q_k \\
	R_k' &= \gamma_k I_k + \gamma_k Q_k
\end{align}
\end{subequations}
Here $q_k>0$ denotes the detection and quarantine rate for a household of size $k$.

Let $q>0$ be the probability of infected individual to get detected. The probability that a household consisting of $k$ infected gets detected equals $d_k(q) = 1-(1-q)^k$. Given recovery and detection rates $\gamma_k$ and $q_k$ for a household of size $k$, the probability, that the household gets detected before recovery equals $q_k/(\gamma_k+q_k)$, which equals $d_k(q)$. Therefore
\begin{equation*}
	\frac{q_k}{\gamma_k} = \frac{d_k(q)}{1-d_k(q)} \sim kq
\end{equation*}
for $q\ll 1$.

To compute the basic reproduction number, we repeat the computations for the next generation matrix. The vector $\xi$ of infected compartments remains unaltered and the new quarantine compartments $Q_1,\dots, Q_K$ are included in the vector $\chi$. The gain term $\calF_k$ for the infected compartments remains unaltered as well, but the loss term reads as $\calV_k= (\gamma_k+q_k)\xi_k$. Accordingly, its Jacobian modifies to $V_{kk} = \gamma_k+ q_k$ and finally the next generation matrix $G$ reads as
\begin{align*}
	G &= \frac{\beta}{N} 
		\lb \frac{j}{\gamma_j+q_j}\sum_{m=i}^K m H_m\, b_{m,i} \rb_{i,j} \\
		&= \calR^\ast 
		\lpm \sum_{m=1}^K \tfrac{m H_m}{N}\, b_{m,1} \\[.5ex] 
			\sum_{m=2}^K \tfrac{m H_m}{N} \, b_{m,2} \\\vdots \\
			\tfrac{K H_K}{N} \,  b_{KK} \rpm 
	\cdot \lpm \frac{1}{1+q_k/\gamma_k},\frac{2}{1+q_2/\gamma_2}, \dots, \frac{K}{1+q_K/\gamma_K}\rpm
\end{align*}
Its non--zero eigenvalue equals 
\begin{equation*}
	\calR_q := \calR^\ast \sum_{k=1}^K \frac{k}{1+q_k/\gamma_k}\sum_{m=k}^K \frac{m H_m}{N}b_{m,k}
	= \calR^\ast \sum_{m=1}^K \frac{m H_m}{N} \sum_{k=1}^m \frac{kb_{m,k}}{1+q_k/\gamma_k}\;.
\end{equation*}
Due to denominator $1+q_k/\gamma_k$ we cannot interpret the last sum as the expected infected cases in a household of size $m$. In the asymptotic case of small quarantine rates $q_k\ll 1$, we can use the series expansion $(1+q_k/\gamma_k)^{-1} \sim 1- q_k/\gamma_k\sim 1-kq$ and get
\begin{align*}
	\calR_q \sim \calR^\ast \sum_{m=1}^K \frac{m H_m}{N} \lb E_m - q \sum_{k=1}^m k^2 b_{m,k}\rb\;.
\end{align*}
So, the second moment of the in--household infection distribution comes into play.

\begin{remark}
If the expectation $E_m$ and the the second moment $\sum_{k=1}^m k^2 b_{m,k}$ of the in--household infections scale linear with the household size $m$, then the above reproduction number depends on also on the third moment of the household size distribution. This leads to a straight--forward extension of the result~\eqref{E:tildeR}.

In case of a binomial distribution of the in--household infections, it holds that $E_m=a(m-1)+1$ and $\sum_{k=1}^m k^2 b_{m,k}=a(m-1)\lb 3+a(m-2)\rb+1$. For the reproduction number we arrive at the expression
\begin{equation*}
	\calR_q \sim \calR^\ast \lsb 1-\frac{q}{\gamma}
		+ a \lb 1- \frac{3q}{\gamma}(1-a)\rb \lb \frac{\mu_2}{\mu_1}-1\rb 
		- \frac{q}{\gamma}a^2 \lb \frac{\mu_3}{\mu_1} - 1 \rb \rsb\;,
\end{equation*}
where $\mu_3 := \sum_{k=1}^K k^3 h_k$ denotes the third moment of the household size distribution.
\end{remark}

The effectiveness of quarantine measures can be assessed by relating the reduction in prevalence to the quarantined individuals. For a given quarantine probability $q>0$, let $Z(q)$ denote the prevalence observed under these quarantine measures and let $Q(q)$ denote the total number of individuals quarantined. We define the effectiveness of the quarantine measures as
\begin{equation*}
    \eta_q := \frac{\abs{Z(q)-Z(0)}}{Q(q)}\;.
\end{equation*}
The numerator equals to the reduction in prevalence and hence $\eta$ can be interpreted as infected cases saved per quarantined case.

In Figure~\ref{F:quarantine} we compare the effectiveness in the setting of household distributions resembling Germany (predominantly small households) and Bangladesh (large households are dominant) for out--household reproduction numbers in the range between $0.9$ and $1.2$. The results indicate, that quarantine measures are less effective in societies with larger households since infection chains inside households are not affected by the quarantine. On the other hand, quarantine measures seem most effective in presence of small households and for low  reproduction numbers.

\begin{figure}[htb]
\centerline{\includegraphics[width=0.67\textwidth]{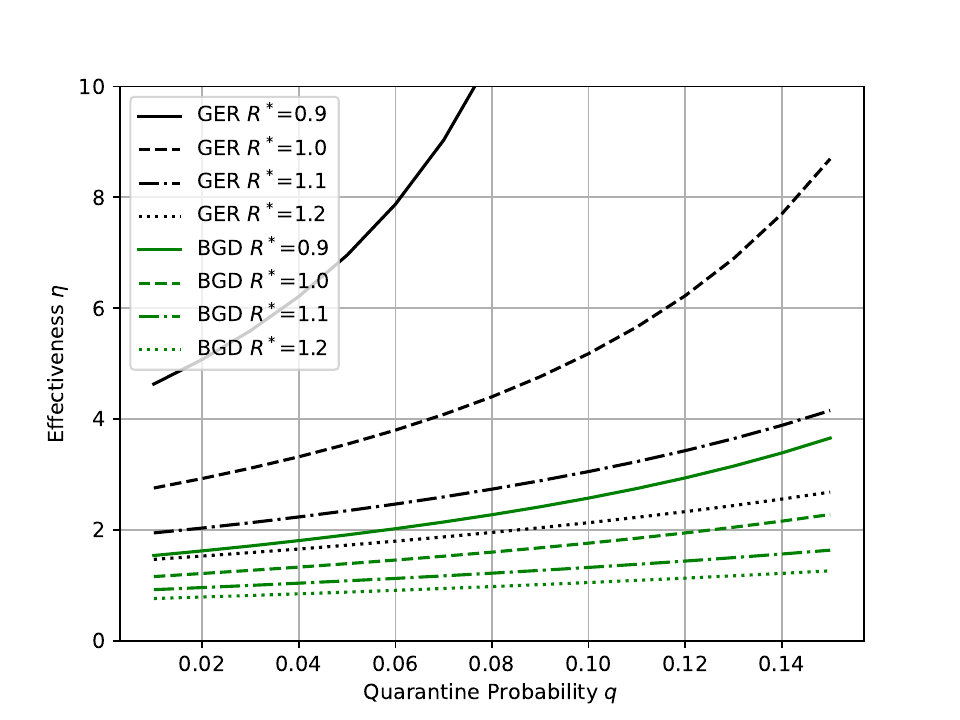}}
\caption{\label{F:quarantine} Effectiveness $\eta$ of quarantine measures for the different household distributions in Germany (GER) and Bangladesh (BGD), see Table~\ref{T:HHsize}.}
\end{figure}

\section{Conclusion and Outlook}
The findings of the compartmental household model~\eqref{E:SIR-HH} are in rather good agreement with microscopic agent--based simulations. Despite its several simplifying assumptions, the prevalence and the peak of the infection wave are reproduced quite well. 

However, field data suggests that the in--household attack rate $a$ is significantly higher for $2$--person households than for larger households, see~\cite{house2022inferring, madewell2020household}. Analyzing the data given in~\cite{house2022inferring} the attack rate in two--person households is by a factor of $1.25$--$1.5$ larger than in households with three or more members. According to~\cite{madewell2020household} this may be caused by spouse relationship to the index case reflecting intimacy, sleeping in the same room and hence longer or more direct exposure to the index case. An extension of the model~\eqref{E:bjk} to attack rates $a_k$ depending on the household size $k$ is straightforward. Carrying over the analytical computation of the reproduction number~\eqref{E:tildeR} would lead to a variant of~\eqref{E:tildeR} including weighted moments of the household size distribution.

When considering quarantine for entire households, the model shows some interesting outcomes. Defining the ratio of reduction in prevalence and number of quarantined persons as the effectivity of a quarantine, the model suggests that quarantine is more effective in populations where small households dominate. Whether this finding is supported by field data is a topic for future literature research.

Vaccinations are not yet included in the model. From the point of view of public health, the question arises whether the limited resources of vaccines should be distributed with preference to larger households. Given the catalytic effect of large households on the disease dynamics, one could suggest to vaccinate large households first. However, a detailed analysis of this question will be subject of follow--up research.

When extending our ODE SIR--model to an SIRS--model including possible reinfection a recombination of the recovered sub--households is required. Again, this extension will be left for future work. A relaxation of the distribution of the recovered sub--household distribution to the overall household distribution seems a tractable approach to tackle this issue.

\appendixpage
\appendix

\section{Microscopic Theory}

In this appendix we would like to contrast the results on the basic reproduction number, see~\eqref{E:tildeR} and on the prevalence, see~\eqref{E:preval} and~\eqref{E:z} with results obtained by asymptotic microscopic theory. As expected, one gets the same criterion for the epidemic threshold as a function of $\calR^\ast$ and the household structure. Additionally it turns out that the computation of the asymptotic prevalence $z$ is more transparent in this setting and can be carried out in great generality once proper assumptions about the in--household infections have been made.

In a microscopic model, we can represent SIR infection dynamics by an Erdös--Rényi random graph $G(N,p)$ \cite{erdoesrenyi1960}. The $N$ nodes of the graph represent the individuals under consideration and the random edges represent infectious contacts between two individuals. Assuming homogeneous mixing and random contact structures, edges are created independently with probability $p$ between each pair of the $N$ nodes. The $G(N,p)$--random graph model, where $p=\frac{\calR^\ast}{N}$ corresponds to an SIR--model with reproduction number $\calR^\ast$ in the limit of large $N$. To capture the timeline of epidemic dynamics one would still need independent exponential Exp$(\gamma)$--distributed ''travel times'' assigned to the edges corresponding to the recovery rate $\gamma$ in the ODE--SIR model.

In the following we concentrate only on results depending on the structural properties ignoring all aspects of timing. As in the simple SIR case we will obtain that the epidemic phase transition in a microscopic model including households depends only on the out-household reproduction number, the in--household attack rate and the household size distribution. The asymptotic prevalence, that is the fraction of population getting infected eventually, does only depend on the structural properties of the underlying random graph model and is given essentially by the size of the largest connected component --- the so--called giant component in case it
exists.

\subsection{Household graphs and phase transition}
\label{S:App1}

As in the ODE model we assume a population of size $N$ partitioned into households of size $k=1,\dots, K$. Analogously to Section~\ref{S:Model}, let $\lcb h_k\rcb$ denote the household size distribution, and $h_{k}$ equals to the fraction of households of size $k$. Furthermore, let $\mu_1$ and $\mu_2$ denote the first and second moment of the household distribution. 

We model the contacts between individuals as an Erdös--Rényi random graph. A directed edge from individual $i$ to individual $j$ represents an infectious contact, where $i$ infects $j$. The probability that such an edge exists is given by
\begin{equation}
	\Pr \lb i\leadsto j\rb =\frac{\calR^\ast}{N}\;.
\end{equation}
Hence, in the limit $N\to \infty$, the reproduction number equals $\calR^\ast$. 

Analogous to the ODE model, households are composed of individuals and hence directed graph between individuals induce also a directed graph between the households. Let $x=(x_1,x_2)$ denote a household of size $x_2$ with $x_1$ infected members. Then, the probability on an infectious contact between household $x$ and another household $y$ equals 
\begin{equation}
	\Pr \lb x\leadsto y\rb =\frac{\calR^\ast}{N} x_{1}\cdot y_{2}
		=\frac{\calR^\ast}{\mu_{1}\cdot H}x_{1}\cdot y_{2}\;,
\end{equation}
where $H$ denotes the total number of households.

As in the ODE model, we denote by $b_{k,l}$ the probability that a primary infection brought into a household of size $k$ produces additional $l-1$ infected household members via the chain of in--household contacts. Since in--household infections and out--household infections are assumed to be independent we have an a priory distribution of household types given by the measure $\nu \lb x\rb =h_{x_{2}}\cdot b_{x_{2},x_{1}}$.

We can now interpret our random network of households as a special case of
heterogenous random graphs as described in Bollobás~\cite{bollobas2007randgraph} for undirected graphs and in Cao~\cite{cao2020digraphs} for directed graphs. We recall the basic construction of Bollobás--Janson--Riordan (BJR) graphs. In a BJR--graph each of the $N$ nodes get assigned a feature variable $x$ from a measurable feature space $\lb\calS,\nu \rb$ where $\nu$ is the asymptotic distribution of features as $N\to \infty$. Edges are defined via a positive kernel function $\varkappa: \calS\times \calS\to \R^+$ such that the probability of a directed edge from
a node of type $x$ to node of type $y$ equals 
\begin{equation}
	\Pr \lcb x\leadsto y\rcb =\min \lb 1, \frac{\varkappa(x,y)}{N}\rb\;.
\end{equation}
In our setting the connections between households are modelled by the kernel function $\varkappa(x,y) =\frac{\calR^\ast}{E}x_{1}\cdot y_{2}$. All conditions required for the kernel function in~\cite{bollobas2007randgraph} are trivially satisfied, since we only consider  a finite number of household types and the kernel is irreducible.

In the framework of BJR--graphs, the transfer operator 
\begin{equation}
	(Tf) (x) := \int_\calS \varkappa(x,y)\, f(y)\, d\nu
\end{equation}
acting on the Hilbert space $L_{2}\lb S,d\nu\rb$ plays a crucial role. The phase
transition is given by the criterion $\rho(T)=1$ on the spectral radius of the operator. Note that the spectral norm $\rho(T)$ can be interpreted as the mean basic reproduction number $\calR_0$, see~\cite{bollobas2007randgraph}.

In our setting the kernel function is of product type $\varkappa(x,y) =\varphi(x)\cdot \cdot \psi(y)$, where $\varphi(x)=\frac{\calR^\ast}{\mu_1}x_{1}$ and $\psi(y)=y_{2}$. In this rank one situation the spectral radius of $T$ is
given by 
\begin{align*}
	\rho(T) &= \int_\calS \phi(x)\cdot \psi(x)\, d\nu(x) \\
	&= \frac{\calR^\ast}{\mu_1} \sum_{x} x_1\cdot x_2
				\cdot h_{x_2}\cdot b_{x_2,x_1}\;.
\intertext{Let $a_m=\frac{1}{m-1}\lb \sum_{k=1}^m k\, b_{k,m}-1\rb$ denote the expected number of in--household infections if a household of size $m$ gets infected from the outside. Then}
	\rho(T) &= \frac{\calR^\ast}{\mu_1} \sum_{x_2} \lb 1+a_{x_2}(x_2-1)\rb 
				\cdot x_2\cdot h_{x_2} \\
		&= \frac{\calR^\ast}{\mu_1} \lsb \mu_1 +  \sum_{x_2} a_{x_2}(x_2-1)
				\cdot x_2\cdot h_{x_2}\rsb\;.
\end{align*}
Note that $a_{m}$ is called in epidemiological literature the secondary in--household attack rate in a household of given size. In the special case when $a_{m}=a$ is independent of the actual household size, we recover the analogue to Eqn.~\eqref{E:tildeR}
\begin{equation}
	\label{E:normT}
	\rho(T) = \calR^\ast \lsb 1+a\lb \frac{\mu_2}{\mu_1}-1\rb \rsb\;. 
\end{equation}

\subsection{The prevalence}
\label{S:App2}

Let $J(0)$ denote the total number of infected individuals at initial time and let $Z(\infty)$ denote the total number of recovered individuals at final time. We define the asymptotic prevalence $z=\frac{Z(\infty)}{N}$ defined as the fraction of recovered individuals assuming that we start with a small (compared to $N$) but still large number $J(0)$ of initially infected individuals. Contrary to the phase transition threshold which only depends on the in--household attack rates $a_m$, the moments of the household size distribution and the out--household reproduction number $\calR^\ast$, the situation for the prevalence $z$ is more involved.

As in Section~\ref{S:Model}, let $H_{k}$ denote the number of households of size $k$ and $H=\sum_k H_{K}$ equals to the total number of households. The moment $\mu_1$ equals to the average household size and $N=\sum_{k} kH_k=H\cdot \mu_1$.

We first compute the probability $p_O$ that an individual had an infectious out--household contact. Note, that having an infectious out--household contact does not necessarily imply that the individual got infected via an out--household contact since it could habe been infected earlier via an in--household contact. 

The prevalence $Z(\infty)=Nz$ equals to the total number of individuals that ever got infected. In the random graph setting, the probability of an infectious out--household contact equals $\frac{\calR^\ast}{N}$ and hence the probability to escape from an infectious contact with all infected individuals from outside the household equals $\lb 1-\frac{\calR^\ast}{N}\rb^{zN}$. Since $p_O$ equals to the probability \emph{not} to escape from out--household infection, we get 
\begin{align}
	p_O &= 1 - \lb 1-\frac{\calR^\ast}{N}\rb^{zN} \notag
\intertext{and in the limit $N\to \infty$}
	p_O &= 1- \exp \lb-\calR^\ast\cdot z\rb\;. \label{E:pO}
\end{align}
Let $v_k=v_k(p_O)$ denote the expected number of infected individuals in a household of size $k$ given the probability $p_O$, that a household member gets infected from the outside. Obviously, we have 
\begin{align*}
	\sum_k H_k v_k &= z\cdot N
\intertext{and hence}
	\sum_k h_k v_k &= z\cdot \mu_1\;.
\end{align*}
The number of times a household of size $k$ is exposed to an infection from outside is $\text{Bin}\lb p_O,k\rb$--distributed. Therefore, we can express $v_k$ as
\begin{equation}
	\label{E:vk_of_pO}
	v_k(p_O) = \sum_{l=1}^k d_{l,k} \binom{k}{l}\, p_O^l\, \lb 1-p_O\rb^{k-l}\;,
\end{equation}
where $d_{l,k}$ is the expected number of infected individuals in a household of size $k$ if $l\geq 1$ household members got exposed to an infection from outside. To compute $d_{l,k}$ we need more specific assumptions about the in-household infection process. Let $P_l\lb m,k\rb$ be the probability that $m$ individuals in a household of size $k$ get infected if $l$ randomly chosen household members got exposed to an infection from outside. Then
\begin{equation}
	\label{E:dlk}
	d_{l,k}= \sum_{m=l}^k m P_l(m,k) \;.
\end{equation}
Let $P_1(m,k)=b_{k,m}$ be given. For the probabilities $P_l(m,k)$ we have the following recursion in with respect to $l$.
\begin{equation}
	\label{E:P_recursive}
	P_{l+1}(m,k) =\sum_{m'=l}^{m-1} 
			P_l(m',k)\cdot P_1(m-m',k-m')\cdot \frac{k-m'}{k-l} 
			+ P_l(m,k)\cdot\lb 1-\frac{k-m}{k-l}\rb\;.
\end{equation}
The terms under the summation describe the events that after $l$ exposures $m'<m$ household members got infected. The next, $(l+1)$th exposure targeted a yet non--infected individual with probability $\frac{k-m'}{k-l}$, and after this $l+1$ exposure additional $m-m'$ new infections
appear with probability $P_{1}(m-m',k-m')$. The last term on the right hand side is just the probability that after $l$ exposures already $m$ individuals got infected and the $(l+1)$th exposure did not generate any new infection. Note that we can rewrite the the terms $P_{1}(a,b)$ appearing in the summation by expression explicitly depending on the household size $k$.

Having computed $v_k(p_O)$ via~\eqref{E:vk_of_pO}, \eqref{E:dlk} and \eqref{E:P_recursive} as a function of $b_{k,l}$ and $p_O(z)=1-e^{-\calR^\ast\cdot z}$ depending on the prevalence $z$, one can solve the fixed point equation 
\begin{equation*}
	\sum_{k\ge 1} h_k v_k(p_O(z)) = \mu_1 \cdot z\;.
\end{equation*}
In case of $\rho(T)>1$, there exists a a unique, nontrivial solution $0<z\le 1$. This fixed point equation is a generalization of the results~\eqref{E:preval} and~\eqref{E:z} obtained from the ODE--SIR model. Structurally, the fixed point problem is a polynomial in $e^{-\calR^\ast z}$.

We close this section with the computation of $b_{k,m}$ for an important
special case. Assume every infected household member infects non--infected
household members with equal probability $p=p(k)$ just depending on the total household size. The total number of infections after an initial infection is brought into the household from the outside, is the result of a possible cascade of consecutive in--household infection events. Let $b_{k}(p)=b_{k,k}$ denote the probability that in a household of size $k$ \emph{all} household members get infected. We have the following recursion in $k$ : 
\begin{align*}
	b_k(p) &= 1-\sum_{m=1}^{k-1} 
		\binom{k-1}{m-1}\, b_m(p)\cdot \lb 1-p\rb^{m(k-m)}\;,
\intertext{where the summation is over all events that less than $k$ household members got infected. Finally we obtain}    
	b_{k,m} &= \binom{k-1}{m-1} \cdot b_m(p) \lb 1-p\rb^{m(k-m)}\;.
\end{align*}

\section*{Acknowledgments}
This collaboration between the groups in Koblenz and in Poland was funded by the DAAD--NAWA joint project ''MultiScale Modelling and Simulation for Epidemics''--MSS4E, DAAD project number: 57602790, NAWA grant number: PPN/BDE/2021/1/00019/U/DRAFT/00001. 

The ICM model was developed as part of the ''ICM Epidemiological Model development'' project, funded by the Ministry of Science and Higher Education of Poland with grants 51/WFSN/2020T and 28/WFSN/2021 to the University of Warsaw.\\
We would like to thank the ICM Epidemiological Model team members for development of a new software engine for agent-based simulations (pDyn2) that was used in generation of part of the results of this work.

TK thanks Wroclaw University of Science and Technology for providing the necessary infrastructure and scientific environment for several meetings of the authors in Wroclaw within the NAWA project. 

TK and PD thank the RESPINOW project (BMBF, project ID 031L0298D) for financing several visits of TK to the MPI Göttingen and one visit of PD to Wroclaw University of Science and Technology.

VP was supported by the infoXpand project within MONID (BMBF, project ID 031L0300A), and by the German Research Foundation (DFG), SFB 1528--Cognition of Interaction.

\bibliography{references}
\bibliographystyle{siam}

\end{document}